\documentclass{article}

\usepackage[utf8]{inputenc}
\usepackage{float}
\usepackage{cite}
\usepackage{amsthm}
\usepackage[all]{xy}
\usepackage{amsmath}
\usepackage{amssymb}
\usepackage{amsfonts}
\usepackage{stmaryrd}

\usepackage{enumerate}

\newtheorem{theo}{Theorem}

\newtheorem{prop}{Proposition}
\newtheorem{lemma}{Lemma}

\theoremstyle{definition}

\theoremstyle{definition}
\newtheorem{defi}{Definition}
\theoremstyle{definition}

\theoremstyle{definition}

\newcommand{\card}[1]{\vert #1 \vert}

\newcommand{\psf}{\mathcal{P}}
\newcommand{\gbox}{[\forall]}
\newcommand{\gdi}{[\exists]}

\newcommand{\pos}{\mathtt{p}}

\author{
  Sebastian Enqvist \and  Fatemeh Seifan \and Yde Venema
}
\title{Expressiveness of the modal $\mu$-calculus on monotone neighborhood structures}

\begin{document}
\maketitle

\begin{abstract}
We characterize the expressive power of the modal $\mu$-calculus on monotone neighborhood structures, in the style of the Janin-Walukiewicz theorem for the standard modal $\mu$-calculus. For this purpose we consider a monadic second-order logic for monotone neighborhood structures. Our main result shows that the monotone modal $\mu$-calculus corresponds exactly to the fragment of this second-order language that is invariant for neighborhood bisimulations. 
\end{abstract}

\section{Introduction}

The \textit{modal $\mu$-calculus} was introduced, in its present form, by D. Kozen in \cite{koze:resu83}. It functions as a general specification language for labelled transition systems, encompassing many systems used in formal verification of processes, including propositional dynamic logic ($\mathtt{PDL}$) and many temporal logics, like computation-tree logic ($\mathtt{CTL}$). In fact, \textit{any} logic for labelled transition systems that is \textit{invariant for bisimulation}, and that can be translated into monadic second order logic, can be seen as a fragment of the $\mu$-calculus. This is due to the Janin-Walukiewicz theorem \cite{jani:expr96}, which states that the modal $\mu$-calculus captures \textit{exactly} the bisimulation invariant fragment of monadic second-order logic. This result is the counterpart for the $\mu$-calculus of  van Benthem's characterization theorem for basic modal logic \cite{bent:moda76}, which isolates modal logic as the bisimulation invariant fragment of first-order logic. 

In this paper, we consider the modal $\mu$-calculus for \textit{monotone neighborhood structures}, rather than labelled transition systems. 
Monotone neighborhood structures are a  generalization of Kripke frames that are used to give semantics for modal logics that do not satisfy the distribution law for conjunctions: 
$$\Box(\varphi \wedge \psi) \leftrightarrow (\Box \varphi \wedge \Box \psi)$$
With monotone neighborhood semantics, this equivalence is weakened to the implication from left to right. One good example of an application of this is \textit{alternating-time temporal logic}, which is useful to reason about state-based evolving systems consisting of several interacting processes. In particular, it allows reasoning about conditions that can be ``forced'' by one process regardless of how other parts of the system behave. It can happen that condition $\varphi$ can be forced, as well as condition $\psi$, but not the conjunction of both. Hence, the distribution law for boxes over conjunctions should not hold.

 In this context, fixed point operators are a natural addition to the basic modal logic; for example, the formula
$$\mu p. \Box p \vee \Box \psi$$
informally expresses the ``liveness property'' that it is possible to force the condition $\psi$ to hold using some  finite sequence of actions. To see why, note that with the intended semantics for the box, the formula $\Box {\perp}$ is equivalent to ${\perp}$ (it should never be possible to force a contradiction to hold by any action!). So by using  ordinal approximations of the least fixpoint, we see that the formula is equivalent to an infinite disjunction:
$$ \Box \psi \vee \Box\Box \psi \vee \Box \Box\Box \psi \vee...$$

 The basic modal logic of monotone neighborhood structures is known as \textit{monotone modal logic}, and following this nomenclature we shall refer to the $\mu$-calculus variant of this logic as the \textit{monotone $\mu$-calculus}. Several results on monotone modal logic have been obtained in a MSc thesis by Hansen \cite{hans:mono03}, including Sahlqvist correspondence and completeness, a Goldblatt-Thomason theorem and Craig interpolation. The interpolation theorem was later strengthened to a uniform interpolation result by Santocanale and Venema \cite{sant:unif10}. Several results on the monotone $\mu$-calculus are also known, including uniform interpolation \cite{seif:unif14} and decidability in exponential time \cite{cirs:expt09}.

In this paper, we characterize the expressive power of the $\mu$-calculus on monotone neighborhood structures precisely, by exhibiting it as the bisimulation invariant fragment of a suitable monadic second-order logic, in the style of the Janin-Walukiewicz theorem. This result suggests that the role of the modal $\mu$-calculus as a ``universal'' specification language extends beyond Kripke frames and labelled transition systems, to the more general setting of monotone neighborhood structures. The monadic second-order language $\mathtt{NMSO}$ that we use as the ``yardstick'' language here was introduced, in a more general setting, in \cite{enqv:mona15}. There, it was shown that the fragment of $\mathtt{NMSO}$ that is invariant for \textit{global} bisimulations corresponds to an extension of the monotone $\mu$-calculus with the \textit{global} modalities. Our main result shows that a formula of this latter system is invariant for bisimulation if, and only if, it is equivalent to a formula without any occurrence of the global modalities. From this, together with the characterization result from \cite{enqv:mona15}, we get our Janin-Walukiewicz theorem for the monotone $\mu$-calculus.
 
\section{Technical preliminaries}

In this section we introduce the rudimentary technical concepts that will be used throughout the paper: neighborhood structures, neighborhood bisimulations and the monotone $\mu$-calculus.

\subsection{Neighborhood structures}

We start by introducing the basic concept of a neighborhood structure, or more specifically, neighborhood \textit{frames} and \textit{models}. These structures provide the standard semantics for monotone modal logic.

\begin{defi}
A \textit{neighborhood frame} is a pair $(S,\sigma)$ where $S$ is a set, and $\sigma$ is a map from $S$ to $\psf \psf (S)$. Elements of $\sigma(s)$, for $s \in S$, are called \textit{neighborhoods} of $s$. A neighborhood frame $(S,\sigma)$ is said to be \textit{monotone} if, for all $s \in S$ and all $Z,Z' \subseteq S$: if $Z \in \sigma(s)$ and $Z \subseteq Z'$, then $Z' \in \sigma(s)$ too.
\end{defi}

Note that Kripke frames can be seen as special instances of monotone neighborhood frames, with the extra condition that the neighborhoods of any point are closed under arbitrary intersections. Given any such neighborhood frame $(S,\sigma)$, we can define the Kripke frame $(S,R_\sigma)$ by setting
$$(u,v) \in R_\sigma \text{ iff } v \in \bigcap \sigma (u)$$
Conversely, given any Kripke frame $(S,R)$ we can define the neighborhood frame $(S,\sigma_R)$ by setting
$$\sigma_R(u) = \{Z \subseteq S \mid  \forall v: \text{ if } (u,v)\in R \text{ then } v \in Z\}$$
From now on, we shall fix a countably infinite set of propositional variables $\mathit{Var}$. We shall also refer to these as \textit{second-order variables}.  
\begin{defi}
A \textit{monotone neighborhood model}, or just a neighborhood model, is a triple $(S,\sigma,V)$ where $(S,\sigma)$ is a monotone neighborhood frame, and $V : \mathit{Var} \to \psf(S)$ is a \textit{valuation} for the propositional variables. Given a neighborhood model $\mathbb{S}$, and given $s \in S$, the pair $(\mathbb{S},s)$ will be called a \textit{pointed} neighborhood model. 
\end{defi}
The fundamental concept in model theory of modal logic is that of a \textit{bisimulation}. We assume that the reader is familiar with the concept of a bisimulation between Kripke models. By now, monotone neighborhood models are also equipped with a fairly standard notion of bisimulation:
\begin{defi}
Let $\mathbb{S}$ and $\mathbb{S}'$ be any pair of monotone neighborhood models. A relation $R \subseteq S \times S'$ is said to be a \textit{neighborhood bisimulation} if, for all $s \in S$ and all $s' \in S'$ with $(s,s') \in R$, the following clauses hold:
\begin{enumerate}
\item For all $Z \in \sigma(s)$ there is some $Z' \in \sigma'(s')$ such that, for all $u' \in Z'$, there is some $u \in Z$ with $(u,u')\in R$
\item For all $Z' \in \sigma'(s')$ there is some $Z \in \sigma(s)$ such that, for all $u \in Z$, there is some $u' \in Z'$ with $(u,u') \in R$
\end{enumerate}
 \end{defi}
The pointed models $(\mathbb{S},s)$ and $(\mathbb{S}',s')$ are said to be \textit{neighborhood bisimilar} if there is a bisimulation $R$ with $(s,s') \in R$. We denote this situation by $(\mathbb{S},s) \sim (\mathbb{S}',s')$. 

The following observation is standard.
\begin{prop}
Neighborhood bisimulations are closed under unions: if $\{R_i\}_{i \in I}$ is a family of neighborhood bisimulations between $\mathbb{S}$ and $\mathbb{S}'$, then  $\bigcup_{i \in I}R_i$ is a neighborhood bisimulation too. 
\end{prop}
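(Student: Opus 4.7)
The plan is to verify the two bisimulation clauses directly, using the fact that the union of the family contains each $R_i$ pointwise. Let $R := \bigcup_{i \in I} R_i$ and fix an arbitrary pair $(s,s') \in R$. By definition of union there is some index $i_0 \in I$ with $(s,s') \in R_{i_0}$, and the strategy is to import the witnesses provided by the bisimulation $R_{i_0}$.

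First I would check the forth clause. Given $Z \in \sigma(s)$, since $R_{i_0}$ is a neighborhood bisimulation and $(s,s') \in R_{i_0}$, there exists $Z' \in \sigma'(s')$ such that for every $u' \in Z'$ some $u \in Z$ satisfies $(u,u') \in R_{i_0}$. Since $R_{i_0} \subseteq R$, the same $Z'$ and the same witnesses $u$ fulfill the required condition for $R$.

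The back clause is verified by an entirely symmetric argument, starting again from $(s,s') \in R_{i_0}$ and invoking clause 2 of the definition for $R_{i_0}$.

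There is no real obstacle here: the proof is essentially a diagram chase through the definition, and the only conceptual point worth emphasizing is that the choice of witnessing neighborhood $Z'$ (respectively $Z$) is done \emph{locally}, inside a single $R_i$ containing the given pair, rather than globally across the family. This is why closure under arbitrary unions works even though the family need not be directed.
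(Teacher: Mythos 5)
Your proof is correct and is exactly the standard argument: the paper states this proposition without proof (calling it a standard observation), and the reasoning you give — locating the pair in a single $R_{i_0}$ and importing its witnesses, which remain valid because $R_{i_0} \subseteq \bigcup_{i \in I} R_i$ — is precisely the argument the paper implicitly relies on.
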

We shall also need the following variation of the concept of neighborhood bisimulations later:
\begin{defi}
A neighborhood bisimulation $R$ between neighborhood models $\mathbb{S}$ and $\mathbb{S}'$ is said to be \textit{global} if it is full on both $S$ and $S'$. In other words, it satisfies the following zig-zag conditions:
\begin{enumerate}
\item For every $s \in S$ there is some $s' \in S'$ such that $(s,s') \in R$
\item For every $s' \in S'$ there is some $s \in S$ such that $(s,s' ) \in R$
\end{enumerate}
\end{defi}
The pointed models $(\mathbb{S},s)$ and $(\mathbb{S}',s')$ are said to be \textit{globally neighborhood bisimilar} if there is a global bisimulation $R$ with $(s,s') \in R$. We denote this situation by $(\mathbb{S},s) \sim_g (\mathbb{S}',s')$.

A useful model construction that should be familiar from standard modal logic is that of \textit{disjoint union}, or \textit{co-product}. Given an indexed family $\{\mathbb{S}_i\}_{i \in I}$ of neighborhood models, where $\mathbb{S}_i = (S_i,\sigma_i,V_i)$ consider the disjoint union $\coprod_{i \in I } S_i$ and let 
$$\iota_j : S_j \to \coprod_{i \in I} S_i$$ be the insertion of $S_i$ into this disjoint union. We can supply this set with a neighborhood map $\tau$ by setting, for all $j \in I$, all $u \in S_j$ and all $Z \subseteq \coprod_{i \in I} S_i$:
$$Z \in \tau(\iota_i(u)) \text{ iff } \iota_i^{-1}[Z] \in \sigma_i(u)$$
Furthermore, we define a valuation $W$ over the disjoint union by setting
$$\iota_i(u) \in W(p) \text{ iff } u \in V_i(p)$$
From now on, we shall not take care to distinguish between  $u$ and $\iota_i(u)$. We define the neighborhood model
$$\coprod_{i \in I} \mathbb{S}_i = (\coprod_{i \in I} S_i, \tau,W)$$
and call this the disjoint union of the $\mathbb{S}_i$. Given two models $\mathbb{S}$ and $\mathbb{S}'$, we denote their disjoint union simply by $\mathbb{S} + \mathbb{S}'$.

\begin{prop}
For each $j \in J$, the graph of the insertion map $\iota_j$ is a neighborhood bisimulation between $\mathbb{S}_j$ and $\coprod_{i \in I}\mathbb{S}_i$. Hence, for all $u \in S_j$, we have
$$(\mathbb{S}_j,u) \sim (\coprod_{i \in I} \mathbb{S}_i, u)$$
\end{prop}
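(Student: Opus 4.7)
The plan is to unfold the definitions and check that the graph $R_j = \{(u,\iota_j(u)) : u \in S_j\}$ satisfies both zig-zag clauses of a neighborhood bisimulation, together with the matching condition on valuations. The proof is essentially a bookkeeping exercise, since the map $\tau$ and the valuation $W$ were designed precisely so that each $\iota_j$ becomes a ``copy-paste'' embedding.

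First I would check atomic harmony: for any $u \in S_j$ and any $p \in \mathit{Var}$, we have $u \in V_j(p)$ iff $\iota_j(u) \in W(p)$ directly by the definition of $W$. Next, for the forth clause, fix $(u,\iota_j(u)) \in R_j$ and $Z \in \sigma_j(u)$. The natural candidate is $Z^* := \iota_j[Z] \subseteq \coprod_{i \in I} S_i$. Using injectivity of $\iota_j$ (and the fact that the images of distinct $\iota_i$ are disjoint), we have $\iota_j^{-1}[Z^*] = Z \in \sigma_j(u)$, so $Z^* \in \tau(\iota_j(u))$ by definition of $\tau$. Every element of $Z^*$ is of the form $\iota_j(v)$ for some $v \in Z$, and $(v,\iota_j(v)) \in R_j$, which verifies the clause.

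For the back clause, fix $Z^* \in \tau(\iota_j(u))$. By the defining equivalence for $\tau$, the set $Z := \iota_j^{-1}[Z^*]$ lies in $\sigma_j(u)$. For any $v \in Z$, the element $\iota_j(v)$ lies in $Z^*$ and satisfies $(v,\iota_j(v)) \in R_j$, as required.

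This shows that $R_j$ is a neighborhood bisimulation; the ``hence'' part of the proposition is then immediate from the fact that $(u,\iota_j(u)) \in R_j$ for every $u \in S_j$, combined with the definition of $\sim$. The only subtlety worth being explicit about is the appeal to injectivity of $\iota_j$ (so that $\iota_j^{-1}[\iota_j[Z]] = Z$), but since we have adopted the convention of identifying $u$ with $\iota_j(u)$, even this step is essentially notational.
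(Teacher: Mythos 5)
Your proof is correct and is exactly the routine definition-unfolding argument one expects here; the paper itself states this proposition without proof, treating it as a standard consequence of how $\tau$ and $W$ are defined on the disjoint union. Both zig-zag clauses are verified properly (with $\iota_j[Z]$ and $\iota_j^{-1}[Z^*]$ as the witnessing neighborhoods), and your extra check of atomic harmony is harmless even though the paper's definition of neighborhood bisimulation only imposes the two neighborhood clauses.
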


\subsection{The monotone modal  $\mu$-calculus}

In this section we present the monotone modal $\mu$-calculus, with and without the global modality. First, the language $\mu \mathtt{NML}$ is defined by the following grammar:
$$ \varphi ::= p \mid \neg p \mid \varphi \wedge \varphi \mid \varphi \vee \varphi \mid \Box \varphi \mid \Diamond \varphi \mid \mu p. \varphi \mid \nu p. \varphi$$
where $p$ ranges over $\mathit{Var}$, and in the formula $\eta p . \varphi$ for $\eta \in \{\mu, \nu\}$, the variable $p$ does not appear under the scope of a negation.
Note that we have presented the language in negation normal form here, so that negations only appear in front of propositional variables. Alternatively we could have presented the language with an unrestricted use of negations and proved a negation normal form theorem, but since this is entirely standard by now we skip this little extra step. 

The extended language $\mu \mathtt{NML}_g$, with the global modalities, is presented by the following grammar: 
$$ \varphi ::= p \mid \neg p \mid \varphi \wedge \varphi \mid \varphi \vee \varphi \mid \Box \varphi \mid \Diamond \varphi \mid  [\forall] \varphi \mid [\exists] \varphi \mid \mu p. \varphi \mid \nu p. \varphi$$
Free and bound variables of a formula are defined as usual. 

Given a valuation $V : \mathit{Var} \to \psf(S)$, a variable $p$ and a subset $Z \subseteq S$, the valuation $V[p \mapsto Z]$ is defined to be the unique valuation that is like $V$, except that it sends $p$ to $Z$. Given a model $\mathbb{S} = (S,\sigma, V)$ and $s \in S$, interpretations of formulas in $\mu \mathtt{NML}_g$ are defined as follows:

\begin{enumerate}
\item $\llbracket p \rrbracket_\mathbb{S} = V(p)$ 
\item $\llbracket \neg p \rrbracket_\mathbb{S} = S \setminus V(p)$
\item $\llbracket \varphi \wedge \psi \rrbracket_\mathbb{S} = \llbracket \varphi \rrbracket_\mathbb{S} \cap \llbracket \psi \rrbracket_\mathbb{S} $
\item $\llbracket \varphi \vee \psi \rrbracket_\mathbb{S} =  \llbracket \varphi \rrbracket_\mathbb{S} \cap \llbracket \psi \rrbracket_\mathbb{S}$
\item $\llbracket \Box \varphi \rrbracket_\mathbb{S} = \{u \in S \mid \llbracket \varphi \rrbracket_\mathbb{S} \in \sigma(u)\}$
\item $\llbracket \Diamond \varphi \rrbracket_\mathbb{S} = \{u \in S \mid \llbracket (S \setminus \varphi \rrbracket_\mathbb{S}) \notin \sigma(u)\}$
\item $\llbracket [\forall] \varphi \rrbracket_\mathbb{S} = \{u \in S \mid \llbracket \varphi \rrbracket_\mathbb{S} = S\}$
\item $\llbracket [\exists] \varphi \rrbracket_\mathbb{S} = \{u \in S \mid \llbracket \varphi \rrbracket_\mathbb{S} \neq \emptyset \}$
\item $\llbracket \mu p. \varphi \rrbracket_\mathbb{S} = \bigcap \{Z \mid  \llbracket \varphi \rrbracket_{(S,\sigma,V[p \mapsto Z])} \subseteq Z\}  $
\item $\llbracket \nu p. \varphi \rrbracket_\mathbb{S} = \bigcup \{Z \mid Z \subseteq \llbracket \varphi \rrbracket_{(S,\sigma,V[p \mapsto Z])} \}  $
\end{enumerate}
We write $(\mathbb{S},s)\vDash \varphi$ for $s \in \llbracket \varphi \rrbracket_\mathbb{S}$.
\begin{defi}
A formula $\varphi$ is said to be \textit{well-named} if:
\begin{itemize}
\item No variable appears both bound and free in $\varphi$, and 
\item For every bound variable $p$ of $\varphi$, there is exactly one subformula of $\varphi$ of the form $\eta p. \psi$ for $\eta \in \{\mu,\nu\}$.
\end{itemize}
The formula  $ \psi$ is then called the \textit{binding definition} of $p$ in $\varphi$, and is denoted by $\mathcal{D}(p,\varphi)$.
\end{defi}
From now on we shall assume that all formulas are well-named, since it is easy to show that every formula is equivalent to a well-named one. Given a pair of bound variables $p,q$ of $\varphi$, we say that  $p$ \textit{ranks higher} than $q$ if $p $ appears free in the binding definition of $q$.  If $\varphi$ has a subformula of the form $\mu p. \mathcal{D}(p,\varphi)$, then we say that $p$ is a \textit{$\mu$-variable}.

\subsection{Game semantics}

The \textit{evaluation game} for a (well-named) formula $\varphi$ in $\mu \mathtt{NML}_g$ and $\mu \mathtt{NML}$ relative to a neighborhood model $\mathbb{S} = (S,\sigma,V)$, denoted $\mathcal{G}(\mathbb{S},\varphi)$, is a two-player game played between ``$\exists$'', or ``Eloise'', and ``$\forall$'', or ``Abelard''. Intuitively, Eloise tries to show that the formula is true at some point in the model, while Abelard tries to refute this same claim. 

The game board has two types of positions. First the ``basic positions'', the set of which is defined to be:
$$S \times \mathsf{Sub}(\varphi)$$
consisting of pairs $(s,\psi)$ with $s \in S$ and $\psi$ any subformula of $\varphi$.
Second, the ``intermediate positions'', which are the elements of the set:
$$\{\forall,\exists\} \times \psf(S) \times   \mathsf{Sub}(\varphi)$$
consisting of triples $(\mathsf{P},Z,\varphi)$ where $\mathsf{P}$ is either $\exists$ or $\forall$, $Z$ is a subset of $S$ and $\varphi$ is a formula. 

We assign a player and a set of admissible moves to a given position as described in the table below. Here, we recall that $\mathcal{D}(p,\varphi)$ denotes the binding definition of the bound variable $p$ in $\varphi$.

\begin{table}[H]
    \centering
\begin{tabular}{|l|c|l|}
\hline
Position  & Player  &  Admissible moves 
\\ \hline
      $(s,\psi \vee \theta)$ 
   & $\exists$  
   & $\{ (s,\psi),(s,\theta)\}$
\\

\hline
      $(s,\psi \wedge \theta)$ 
   & $\forall$  
   & $\{ (s,\psi),(s,\theta)\}$
                                                 \\
\hline
      $(s,x)$ 
   & --  
   & $\{ (s,\mathcal{D}(x,\varphi))\}$ \\
\hline
      $(s,\Box \psi)$ 
   & $\exists$  
   & $\{ (\forall, Z,\psi) \mid Z \in \sigma(s)\}$\\
\hline
      $(s,\Diamond \psi)$ 
   & $\forall$  
   & $\{ (\exists,Z,\psi) \mid Z \in \sigma(s)\}$\\
\hline
      $(\forall,Z,\psi)$ 
   & $\forall$  
   & $\{(t,\psi) \mid t \in Z\}$ \\
\hline
      $(\exists,Z,\psi)$ 
   & $\exists$  
   & $\{(t,\psi) \mid t \in Z\}$\\
\hline
      $(s,[\forall]\psi)$ 
   & $\forall$  
   & $\{(t,\psi) \mid t \in S\}$\\
\hline
      $(s,[\exists]\psi)$ 
   & $\exists$  
   & $\{(t,\psi) \mid t \in S\}$\\
\hline
      $(s,p) $ with $s \in V(p)$ 
   & $\forall$  
   & $\emptyset$\\
\hline
      $(s,p) $ with $s \notin V(p)$ 
   & $\exists$  
   & $\emptyset$\\
\hline
      $(s,\neg p) $ with $s \in V(p)$ 
   & $\exists$  
   & $\emptyset$\\
\hline
      $(s,\neg p) $ with $s \notin V(p)$ 
   & $\forall$  
   & $\emptyset$\\
\hline
      $(s,\top) $
   & $\forall$  
   & $\emptyset$\\
\hline
      $(s,{\perp}) $ 
   & $\exists$  
   & $\emptyset$\\
\hline
  \end{tabular}
\end{table}

The concepts of a \textit{match}, a \textit{partial match} and a \textit{strategy} are defined as usual. A finite match is lost by the player who got stuck, and an infinite match is won by $\forall$ if the unique highest ranking variable that appears infinitely many times on the match is a $\mu$-variable. Otherwise the winner is $\exists$. A strategy $\chi$ is \textit{winning} for player $\mathsf{P}$ at position $\pos$ if $\mathsf{P}$ wins every $\chi$-guided match starting at $\pos$, i.e. every match starting at $\pos$ and in which all moves by $\mathsf{P}$ are made according to $\chi$. Given a player $\mathsf{P} \in \{\exists,\forall\}$, the set of positions of $\mathcal{G}(\mathbb{S},\varphi)$ at which $\mathsf{P}$ has a winning strategy are denoted by
$$ \mathit{Win}_\mathsf{P}(\mathcal{G}(\mathbb{S},\varphi))$$
We now list four important results about the game semantics. These can all be proved by entirely routine methods, so we omit the arguments.
\begin{prop}[Adequacy of Game Semantics]
For any neighborhood model $\mathbb{S}$, any $s \in S$, and any formula $\varphi \in \mu \mathtt{NML}_g$, we have
$$(\mathbb{S},s)\vDash \varphi \text{ iff } (s,\varphi) \in \mathit{Win}_\exists(\mathcal{G}(\mathbb{S},\varphi)) $$
\end{prop}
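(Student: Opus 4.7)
The plan is to prove both directions simultaneously by structural induction on $\varphi$, though the fixpoint operators force me to supplement the induction with an ordinal-approximation (equivalently, \emph{signature}) argument. After observing that every formula is equivalent to a well-named one, I can treat the binding definition $\mathcal{D}(p,\varphi)$ of each bound variable as a ``macro'' that is unfolded when the game visits a variable position, so that the induction is really on the fixpoint alternation depth/subformula structure rather than naive size.

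The Boolean cases are trivial. For the modal cases, monotonicity is the crucial ingredient. At $(s,\Box\psi)$, $\exists$ wins iff she can pick $Z\in\sigma(s)$ such that for every $t\in Z$ she wins at $(t,\psi)$; by the inductive hypothesis this means $Z\subseteq\llbracket\psi\rrbracket_{\mathbb{S}}$, and by monotonicity of $\sigma$ the existence of such a $Z$ is equivalent to $\llbracket\psi\rrbracket_{\mathbb{S}}\in\sigma(s)$, i.e., $s\in\llbracket\Box\psi\rrbracket_{\mathbb{S}}$. The case $(s,\Diamond\psi)$ is handled dually: monotonicity gives that $\llbracket S\setminus\psi\rrbracket_{\mathbb{S}}\notin\sigma(s)$ iff every $Z\in\sigma(s)$ meets $\llbracket\psi\rrbracket_{\mathbb{S}}$, which is exactly the winning condition for $\exists$ after $\forall$'s choice. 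The global modalities $[\forall]\psi$ and $[\exists]\psi$ are immediate from the definition of the intermediate positions, since $\forall$ (resp.\ $\exists$) picks any $t\in S$ and the induction hypothesis converts this directly into $\llbracket\psi\rrbracket_{\mathbb{S}}=S$ (resp.\ $\neq\emptyset$).

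For the fixpoint cases I would use the standard signature method. For $(\mathbb{S},s)\vDash\mu p.\psi\Rightarrow(s,\mu p.\psi)\in\mathit{Win}_{\exists}$, approximate $\llbracket \mu p.\psi\rrbracket_{\mathbb{S}}$ by ordinals and assign to each state satisfying the formula its least witnessing ordinal. The strategy for $\exists$ inherits, from the induction hypothesis applied to $\psi$, moves that strictly decrease this signature whenever the match revisits the binding definition of $p$; well-foundedness of the ordinals then forbids any infinite match in which $p$ is the highest-ranking variable that occurs infinitely often. For nested fixpoints one tracks a tuple of signatures ordered lexicographically by rank, the outer $\mu$-signatures being ``refreshed'' when a higher-ranking variable is unfolded. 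The $\nu$-direction and the two converses are dual (using greatest-fixpoint post-fixpoints for $\forall$), and one can shortcut half of the work by invoking positional determinacy of parity games.

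The main obstacle is precisely this signature bookkeeping for nested fixpoints, where one must argue that the lexicographic tuple of $\mu$-signatures decreases strictly at each revisit of the highest-ranking infinitely-recurring variable while being merely reset (not increased beyond its previous value) on lower-rank variables. This is, however, the standard Streett–Emerson argument from the Kripke $\mu$-calculus, and it transfers here without change because the new game positions introduced by monotone neighborhoods and the global modalities affect only the \emph{state} component of the position, never the formula component that drives the parity condition.
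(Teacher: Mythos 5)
Your proposal is correct: the paper itself omits this proof, remarking only that it ``can be proved by entirely routine methods,'' and your outline is exactly that routine argument --- structural induction with the signature/ordinal-approximation method for the fixpoint cases, positional determinacy to halve the work, and (crucially, and correctly identified) monotonicity of $\sigma$ to pass between ``some neighborhood contained in $\llbracket\psi\rrbracket_{\mathbb{S}}$'' and ``$\llbracket\psi\rrbracket_{\mathbb{S}}$ itself is a neighborhood'' in the $\Box$ and $\Diamond$ cases. Nothing further is needed.
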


\begin{prop}[History-free Determinacy]
For any formula $\varphi$, any neighborhood model $\mathbb{S}$ and any position $\pos$ in $\mathcal{G}(\mathbb{S},\varphi)$, we have
$$\pos  \in \mathit{Win}_\exists(\mathcal{G}(\mathbb{S},\varphi)) \cup \mathit{Win}_\forall(\mathcal{G}(\mathbb{S},\varphi))$$
Furthermore, if $\mathsf{P} \in \{\forall,\exists\}$ has a winning strategy that is winning at the position $\pos$, then that player has a winning strategy $\chi$ at $\pos$ which is \textit{positional}. This means that for all partial matches $\pi$ and $\pi'$ starting at $\pos$, such that the last position of both these partial matches is the same, we have $\chi(\pi) = \chi(\pi')$. 
\end{prop}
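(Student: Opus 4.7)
The plan is to recognize that $\mathcal{G}(\mathbb{S},\varphi)$ is, in a standard way, a parity game with finitely many priorities, and then invoke the classical positional determinacy theorem for parity games, which applies to arbitrary (possibly infinite) arenas as long as the range of priorities is finite.

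First, I would fix a linearization of the ``ranks higher'' order on the bound variables of $\varphi$, assigning to each bound variable $x$ a natural number $r(x)$ such that $r(x) > r(y)$ whenever $x$ ranks higher than $y$. Next, I would define a priority function $\Omega$ on positions of $\mathcal{G}(\mathbb{S},\varphi)$: for a basic position $(s,x)$ with $x$ a bound variable, set $\Omega(s,x) = 2 r(x) + 1$ if $x$ is a $\mu$-variable, and $\Omega(s,x) = 2 r(x)$ if $x$ is a $\nu$-variable; set $\Omega(\pos) = 0$ for every other position. A standard trace-theoretic argument, using the well-named convention together with the structure of fixpoint unfoldings, shows that on any infinite match the highest-ranking bound variable seen infinitely often is well-defined and corresponds exactly to the maximum priority appearing infinitely often. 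The paper's winning condition thus coincides with the standard parity condition: an infinite match is won by $\exists$ iff the maximum priority appearing infinitely often is even.

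With this translation in place, I would invoke the positional determinacy theorem for parity games, due originally to Emerson and Jutla and to Mostowski. The usual proofs, for instance by induction on the number of priorities in Zielonka's style, or via the fixpoint characterization of winning regions, carry over verbatim to infinite arenas, provided only that the set of priorities is finite. In our setting, the number of priorities is bounded by $2 \card{\mathsf{Bnd}(\varphi)} + 1$, where $\mathsf{Bnd}(\varphi)$ denotes the (finite) set of bound variables of $\varphi$, so the theorem applies and delivers both determinacy and the existence of positional winning strategies for the winner. For finite matches no deep argument is needed: by definition the player who gets stuck loses, and any deterministic choice of admissible move at each position furnishes the opponent with a positional winning strategy.

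The main obstacle is really just bookkeeping: the evaluation game has infinitely many intermediate positions $(\mathsf{P},Z,\psi)$ as soon as $S$ is infinite, so one must appeal to the infinite-state version of positional determinacy rather than the more elementary finite case. A subsidiary point of care is verifying that the ``highest ranking variable seen infinitely often'' is genuinely unique along every infinite match, but this is the same bookkeeping that underlies the passage from the Kozen-style winning condition to a parity index in the ordinary $\mu$-calculus. Beyond these two observations, the argument is mechanical, which is presumably what the authors had in mind in declaring the proof to be routine.
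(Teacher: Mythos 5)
The paper itself gives no proof of this proposition, explicitly omitting it as provable "by entirely routine methods," and your proposal is precisely the standard routine argument the authors have in mind: recast the winning condition as a parity condition via a priority assignment compatible with the dependency order on bound variables, then invoke positional determinacy of parity games over arbitrary (possibly infinite) arenas with finitely many priorities. Your argument is correct, including the two points you rightly flag as needing care (the uniqueness of the highest-ranking variable occurring infinitely often, and the need for the infinite-arena version of the determinacy theorem).
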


\begin{prop}[Bisimulation Invariance]
\label{invariancelocal}
Let $\varphi$ be any formula in $\mu\mathtt{NML}$, let $\mathbb{S}$ and $\mathbb{S}'$ be  neighborhood models that are related by some neighborhood bisimulation $R$, and let $\mathsf{P} \in \{\exists,\forall\}$. Then, for every subformula $\psi$ of $\varphi$ and any pair of states $s \in S$ and $s' \in S'$ such that $(s,s') \in R$, we have:
$$(s,\psi) \in \mathit{Win}_{\mathsf{P}}(\mathcal{G}(\mathbb{S},\varphi)) \text{ iff } (s',\psi) \in \mathit{Win}_{\mathsf{P}}(\mathcal{G}(\mathbb{S}',\varphi))$$
\end{prop}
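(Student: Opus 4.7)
The plan is a standard positional-strategy-transfer argument. By history-free determinacy I may restrict to positional strategies, and since $\mathit{Win}_\forall$ is the complement of $\mathit{Win}_\exists$, it suffices to prove the equivalence for $\mathsf{P}=\exists$. Since the converse $R^{-1}$ is itself a neighborhood bisimulation, only one direction remains: given a positional winning strategy $\chi$ for $\exists$ at $(s,\psi)$ in $\mathcal{G}(\mathbb{S},\varphi)$, I would construct a winning strategy $\chi'$ for $\exists$ at $(s',\psi)$ in $\mathcal{G}(\mathbb{S}',\varphi)$. The restriction to $\mu\mathtt{NML}$ (rather than $\mu\mathtt{NML}_g$) is essential here, since no $[\forall]$- or $[\exists]$-positions need to be handled; these are preserved only by \emph{global} bisimulations.

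I would define $\chi'$ by running a \textit{shadow match} in $\mathcal{G}(\mathbb{S},\varphi)$ in parallel with the real match in $\mathcal{G}(\mathbb{S}',\varphi)$, maintaining the invariant that whenever the real match reaches a basic position $(t',\theta)$, the shadow match is at $(t,\theta)$ with $(t,t')\in R$. At Boolean and fixpoint-unfolding positions both matches take the same syntactic choice, which trivially preserves the invariant. The interesting cases are the modal moves. At a $\Box$-position $(t',\Box\theta)$ with shadow $(t,\Box\theta)$, I let $Z:=\chi(t,\Box\theta)\in\sigma(t)$ and apply clause~(1) to obtain $Z'\in\sigma'(t')$, which I declare to be $\chi'(t',\Box\theta)$; any subsequent choice $u'\in Z'$ by $\forall$ is matched in the shadow by a witness $u\in Z$ supplied by the same clause, so that $(u,u')\in R$. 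The $\Diamond$-case is strictly dual, using clause~(2): $\forall$'s choice $Z'\in\sigma'(t')$ is first matched in the shadow by some $Z\in\sigma(t)$, after which $\chi$'s prescription $u\in Z$ is lifted by clause~(2) to an element $u'\in Z'$ that becomes $\chi'$'s reply.

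To see that $\chi'$ is winning, I note that by construction the sequence of subformulas appearing at basic positions of the real match coincides with that of the shadow. Finite matches therefore end on atomic positions carrying the same formula at $R$-related states, so the valuation-preservation clause (implicit in the definition of a neighborhood bisimulation) ensures that both positions are won by the same player, and $\chi$'s victory transfers. Infinite matches produce identical sequences of subformulas, hence the unique highest-ranked variable appearing infinitely often is the same in both matches, and the parity winning condition gives $\exists$ the win in the real match exactly when she wins its shadow.

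The main technical point is the bookkeeping at modal positions: a single transition $(t',\Box\theta)\to(\forall,Z',\theta)\to(u',\theta)$ in the real game corresponds to two rounds in the shadow, where clause~(1) is consulted twice in succession---first to choose the neighborhood $Z'$ given $Z$, then to pick the state $u$ given $u'$---and the order of these consultations is essential so that the positional $\chi$ is applied at the correct shadow position. The $\Diamond$-case is the mirror image using clause~(2). Everything else is routine.
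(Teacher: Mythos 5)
Your proof is correct and is exactly the routine strategy-transfer argument the paper has in mind: the paper omits the proof of this proposition entirely, declaring it provable ``by entirely routine methods,'' and your shadow-match construction (positional strategies via history-free determinacy, reduction to the $\exists$/one-direction case via determinacy and the symmetry of $R$, clause~(1) for $\Box$ and clause~(2) for $\Diamond$, identical subformula traces for the parity condition) is that routine argument. One small point worth keeping: you are right that an atomic/valuation-harmony clause is needed for the finite base case, and it is indeed only \emph{implicit} in the paper's stated definition of neighborhood bisimulation, which as printed lists only the two neighborhood zig-zag conditions.
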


\begin{prop}[Global Bisimulation Invariance]
\label{invarianceglobal}
Let $\varphi$ be any formula in $\mu\mathtt{NML}_g$, let $\mathbb{S}$ and $\mathbb{S}'$ be  neighborhood models that are related by some \textit{global} neighborhood bisimulation $R$, and let $\mathsf{P} \in \{\exists,\forall\}$. Then, for every subformula $\psi$ of $\varphi$ and any pair of states $s \in S$ and $s' \in S'$ such that $(s,s') \in R$, we have:
$$(s,\psi) \in \mathit{Win}_{\mathsf{P}}(\mathcal{G}(\mathbb{S},\varphi)) \text{ iff } (s',\psi) \in \mathit{Win}_{\mathsf{P}}(\mathcal{G}(\mathbb{S}',\varphi))$$
\end{prop}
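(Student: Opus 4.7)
The plan is to adapt the tandem-play argument underlying Proposition~\ref{invariancelocal}, taking care of the two global modalities. By History-free Determinacy I may assume strategies are positional, and since $R^{-1}$ is again a global neighborhood bisimulation it suffices to prove the left-to-right direction: from a positional winning strategy $\chi$ for $\mathsf{P}$ at $(s,\psi)$ in $\mathcal{G}(\mathbb{S},\varphi)$, I will construct a positional winning strategy $\chi'$ for $\mathsf{P}$ at $(s',\psi)$ in $\mathcal{G}(\mathbb{S}',\varphi)$.

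The construction is by playing a shadow $\chi$-guided match in $\mathcal{G}(\mathbb{S},\varphi)$ alongside the actual $\chi'$-guided match in $\mathcal{G}(\mathbb{S}',\varphi)$. The invariant is that synchronized basic positions $(t,\theta)$ and $(t',\theta)$ share their subformula component and satisfy $(t,t') \in R$, while synchronized intermediate positions $(\mathsf{Q},Z,\theta)$ and $(\mathsf{Q},Z',\theta)$ are linked by whichever zig-zag compatibility was forced at the preceding modal step. The case analysis on the shape of the current subformula then proceeds exactly as for Proposition~\ref{invariancelocal} in the propositional, Boolean, fixed-point-unfolding and local modal cases; in particular the $\Box$ and $\Diamond$ cases use the two local zig-zag clauses of the definition of a neighborhood bisimulation to translate a chosen neighborhood and then the subsequent choice of a point within it.

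The only genuinely new cases are the global modalities. At a position $(t,[\forall]\theta)$, player $\forall$ must move; if $\forall$ plays $(u,\theta)$ on the $\mathbb{S}$-side via $\chi$, I use the first global zig-zag clause to pick some $u' \in S'$ with $(u,u') \in R$ and have $\forall$ respond with $(u',\theta)$ on the $\mathbb{S}'$-side, while if $\forall$ first plays $(u',\theta)$ on the $\mathbb{S}'$-side, the second global zig-zag clause yields some $u \in S$ with $(u,u') \in R$ that I feed back into $\chi$. The case of $[\exists]\theta$ is entirely symmetric, with $\exists$ as the choosing player. This step is exactly where globality of $R$ is essential, as the admissible moves from a global modality range over the entire state space. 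To conclude, a synchronized pair of matches visits the same sequence of subformulas, so the parity-style winning condition gives the same winner on infinite matches; on finite matches, the invariant together with preservation of atomic valuations under $R$ ensures termination at atomic positions whose winners agree. The main obstacle, as in Proposition~\ref{invariancelocal}, is purely bookkeeping for the intermediate positions in the modal cases; the novel global-modal cases add nothing beyond a single appeal to each of the two global zig-zag clauses.
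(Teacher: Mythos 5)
Your argument is correct and is exactly the routine shadow-match construction the authors have in mind when they declare Propositions \ref{invariancelocal}--\ref{invarianceglobal} provable ``by entirely routine methods'' and omit the proofs: the reduction to one direction via $R^{-1}$, the zig-zag translation at $\Box$/$\Diamond$, the two appeals to fullness of $R$ at $\gbox$/$\gdi$, and the matching of winners via identity of the subformula traces are all as intended (the claim that $\chi'$ is itself positional is unnecessary and not literally delivered by the construction, but harmless given History-free Determinacy). One point in your favour: you explicitly invoke preservation of atomic valuations under $R$ at the terminal positions, a harmony clause that the paper's definition of neighborhood bisimulation accidentally omits but clearly intends, since without it even atomic formulas would fail to be bisimulation invariant.
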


The game semantics for $\mu \mathtt{NML}$ and $\mu \mathtt{NML}_g$ will be the key technical tools that we use to obtain our Janin-Walukiewicz theorem for $\mu\mathtt{NML}$.

\section{Expressive completeness of $\mu\mathtt{NML}$}

\subsection{The monadic second-order logic of monotone neighborhood structures}

We now present a monadic second-order language for monotone neighborhood structures. This language is very closely related to the monadic second-order logic introduced by Walukiewicz in \cite{jani:expr96}. Following both the presentation in \cite{jani:expr96} and \cite{enqv:mona15}, we shall use a ``single-sorted'' presentation of monadic second-order logic here, without the presence of any individual (first-order) variables. This is no restriction, since individual variables can be ``simulated'' by monadic second-order variables. This is due to the simple fact that, given a second-order variable $p$, there is a second-order formula $\mathtt{Sing}(p)$ stating that the value of $p$ is a singleton set.

The syntax of the monadic second-order language $\mathtt{NMSO}$ is given by the following grammar:

$$\varphi ::= sr(p) \mid p \subseteq q \mid \Box(p,q) \mid \varphi \vee \varphi \mid \varphi \wedge \varphi \mid \neg \varphi \mid \exists p. \varphi $$

Semantics relative to a pointed model with $\mathbb{S} = (S,\sigma,V)$ are defined as follows:
\begin{enumerate}
\item $(\mathbb{S},s)\vDash sr(p)$ iff $V(p) = \{s\}$
\item $(\mathbb{S},s)\vDash p \subseteq q$ iff $V(p) \subseteq V(q)$
\item $(\mathbb{S},s) \vDash \Box(p,q)$ iff $V(q) \in \sigma(t) $ for all $t \in V(p)$
\item Standard clauses for Boolean connectives
\item $(\mathbb{S},s)\vDash \exists p. \varphi $ iff, for some $Z \subseteq S$ we have
 $$(S,\sigma,V[p \mapsto Z],s)\vDash \varphi$$ 
\end{enumerate}
Note that there is a hidden quantifier pattern encoded in an atomic formula $\Box(p,q)$ of the form ``$\forall \exists \forall$'': for all states $t$ in the extension of $p$ there exists a neighborhood $Z$ of $t$ such that, for all members $t'$ of $Z$, $t'$ satisfies $q$. 

A formula $\varphi$ of $\mathtt{NMSO}$ is said to be \textit{invariant for neighborhood bisimulations}, or just \textit{bisimulation invariant}, if whenever $(\mathbb{S},s) \sim (\mathbb{S}',s')$ we have
$$(\mathbb{S},s)\vDash \varphi \text{ iff } (\mathbb{S}',s')\vDash \varphi $$
Invariance for global neighborhood bisimulations is defined in the same way. We denote by $\mathtt{NMSO}{/}{\sim}$ the fragment of $\mathtt{NMSO}$ that is invariant for neighborhood bisimulations, and similarly  $\mathtt{NMSO}{/}{\sim_g}$ denotes the fragment of $\mathtt{NMSO}$ that is invariant for global neighborhood bisimulations. We can then state the main result about $\mathtt{NMSO}$ from \cite{enqv:mona15} as follows:
\begin{theo}
\label{globalchar}
$$\mathtt{NMSO}{/}{\sim_g} \equiv \mu \mathtt{NML}_g$$
\end{theo}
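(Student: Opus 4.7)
The plan is to prove the two inclusions $\mu\mathtt{NML}_g \subseteq \mathtt{NMSO}{/}{\sim_g}$ and $\mathtt{NMSO}{/}{\sim_g} \subseteq \mu\mathtt{NML}_g$ separately, the first being largely routine and the second requiring an automaton-theoretic detour in the style of Janin--Walukiewicz.

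For the easy inclusion, I would give a standard translation $\mathit{ST}$ sending each $\mu\mathtt{NML}_g$-formula $\varphi$, together with a designated singleton variable $x$ for the evaluation state, to an $\mathtt{NMSO}$-formula $\mathit{ST}_x(\varphi)$. Booleans translate directly. For $\Box\psi$ I would write $\exists q.\, \Box(x,q) \wedge \forall y.\, (sr(y) \wedge y \subseteq q \to \mathit{ST}_y(\psi))$; the point is that by monotonicity of $\sigma$, some neighborhood of $x$ being contained in $\llbracket\psi\rrbracket$ is equivalent to $\llbracket\psi\rrbracket$ itself being a neighborhood, so this faithfully encodes the clause for $\Box$, and $\Diamond\psi$ is its De Morgan dual. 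The global modalities $[\forall]$ and $[\exists]$ become quantifications over singletons. The fixpoint operators are simulated by the Knaster--Tarski characterization: $\mu p.\, \psi$ becomes $\forall p.\, \bigl((\forall y.\, sr(y) \to (\mathit{ST}_y(\psi) \to y \subseteq p)) \to x \subseteq p\bigr)$, and $\nu p.\, \psi$ is dual. Global bisimulation invariance of the resulting formulas is already furnished by Proposition \ref{invarianceglobal}.

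The hard inclusion is obtained along the Janin--Walukiewicz template, adapted to monotone neighborhood semantics. First, I would introduce a class of parity automata whose one-step transition condition at each state is a sentence in a monadic second-order one-step language describing the neighborhood collection at a single point, and argue by a compositional translation that every $\mathtt{NMSO}$-formula is equivalent to such an automaton. Second, I would single out a restricted class of automata whose one-step conditions live in a monotone modal one-step fragment, and show that this class corresponds exactly to $\mu\mathtt{NML}_g$: the global modalities and the fixpoint binding machinery of the $\mu$-calculus suffice to compile these automata back into formulas, with priorities handled by nested fixpoints. Finally, the key reduction is to show that for a globally bisimulation-invariant $\mathtt{NMSO}$-formula, the associated automaton can be replaced by an equivalent automaton whose one-step conditions lie in the monotone modal one-step fragment.

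The main obstacle is precisely this last step, which is a one-step analogue of the theorem itself: one must prove that every property of the neighborhood structure around a point that is definable in monadic second-order one-step logic and is invariant under one-step neighborhood bisimulations is already expressible in the monotone modal one-step language. The expected route is via a bisimulation-saturation construction at the level of a single state, replacing the neighborhood collection by a sufficiently rich canonical one on which the second-order expressive power collapses onto the monotone modal fragment; compactness and the disjoint union lemma (the second Proposition of the preliminaries) do most of the semantic work. Once this local lemma is established, the global theorem follows by a uniform transformation on automata, and extracting a $\mu\mathtt{NML}_g$-formula from the resulting monotone automaton proceeds by the standard dictionary between parity automata and nested fixpoints.
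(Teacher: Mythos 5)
You should first note that the paper does not actually prove Theorem~\ref{globalchar}: it is imported as a black box from \cite{enqv:mona15}, and the present paper's contribution (Theorem~\ref{main}) builds on top of it. So there is no in-paper proof to compare your proposal against; what you have written is in effect a reconstruction of the strategy of that earlier paper. Your easy inclusion is correct and checkable: the clause for $\Box\psi$ correctly exploits monotonicity (some $Z \in \sigma(s)$ with $Z \subseteq \llbracket\psi\rrbracket_\mathbb{S}$ exists iff $\llbracket\psi\rrbracket_\mathbb{S} \in \sigma(s)$), and your Knaster--Tarski encoding of $\mu p.\psi$ is essentially the same device the paper itself uses for its translation $\mathtt{Eq}(\mu q.\varphi,p)$ of $\mu\mathtt{NML}$ into $\mathtt{NMSO}$; invariance of the translated formula then transfers from Proposition~\ref{invarianceglobal} as you say.

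The hard inclusion, however, is only a plan, and the plan bottoms out in precisely the statement that carries all of the difficulty: the one-step expressive completeness lemma, asserting that every property of the neighborhood collection at a single point that is definable in one-step monadic second-order logic and invariant under one-step bisimulations is already definable in the one-step monotone modal language. You name this lemma but do not prove it, and the ``expected route'' you sketch does not work as stated: compactness is not available for the second-order one-step language, and the disjoint union proposition from the preliminaries operates on whole models, not on the double-powerset structure $\sigma(s) \in \psf\psf(S)$ sitting at a single state, so neither tool does the ``semantic work'' you assign to it. In \cite{enqv:mona15} this step is carried out by an explicit normal-form analysis of one-step MSO sentences over the monotone neighborhood functor, and without some such argument the reduction from automata with MSO one-step conditions to automata with modal one-step conditions --- and hence the whole hard direction --- does not go through. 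As it stands, your proposal establishes the easy inclusion and a correct high-level architecture for the other, but leaves the decisive lemma unproven.
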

Here, the equivalence symbol $\equiv$ is intended to have the meaning that, for every formula $\varphi$ of $\mathtt{NMSO}{/}{\sim_g}$ there is a formula $\varphi' $ of $\mu \mathtt{NML}_g$ true in exactly the same pointed models as $\varphi$, and vice versa. In words: $\mu \mathtt{NML}_g$ is the fragment of $\mathtt{NMSO}$ that is invariant for global bisimulations.

Our main contribution here is to strengthen this result, and show that $\mu \mathtt{NML}$ is the bisimulation invariant fragment of $\mathtt{NMSO}$:

\begin{theo}
\label{goal}
$$\mathtt{NMSO}{/}{\sim} \equiv \mu \mathtt{NML}$$
\end{theo}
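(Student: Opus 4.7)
The plan is to prove the two inclusions separately. The direction $\mu\mathtt{NML} \subseteq \mathtt{NMSO}{/}{\sim}$ I would handle routinely: exhibit a standard translation of $\mu\mathtt{NML}$ into $\mathtt{NMSO}$, sending the box to $\Box(p,q)$ (together with the singleton predicate $sr$ to carry the current evaluation point) and its Boolean dual for $\Diamond$, and encoding least/greatest fixpoints via the Knaster--Tarski characterisation as second-order quantification over subsets. By Proposition \ref{invariancelocal} every $\mu\mathtt{NML}$ formula is bisimulation-invariant, so its translation lies in $\mathtt{NMSO}{/}{\sim}$.

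For the hard direction, let $\varphi \in \mathtt{NMSO}{/}{\sim}$. Because every global neighborhood bisimulation is in particular a neighborhood bisimulation, $\varphi$ is \emph{a fortiori} invariant under global bisimulations, so Theorem \ref{globalchar} provides an equivalent $\varphi' \in \mu\mathtt{NML}_g$; this $\varphi'$ inherits bisimulation-invariance from $\varphi$. The whole weight of the proof then rests on the following key lemma: \emph{every bisimulation-invariant formula of $\mu\mathtt{NML}_g$ is equivalent to a formula of $\mu\mathtt{NML}$.}

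My strategy for the key lemma would be model-theoretic, in the spirit of Janin--Walukiewicz. Given a bisimulation-invariant $\varphi' \in \mu\mathtt{NML}_g$, I would, for every pointed model $(\mathbb{S},s)$, construct a locally bisimilar companion $(\mathbb{S}^\sharp, s)$ in which the effect of the global modalities $[\forall]$ and $[\exists]$ is already captured by the states ``locally reachable'' from $s$ through iterated $\Box$/$\Diamond$ moves. A natural candidate is the disjoint sum of $\mathbb{S}$ with a ``$\varphi'$-saturated'' model realising exactly the bisim-types that need to be available, using Proposition 3 to preserve the bisim-type of $s$. On such companion models each subformula $[\forall]\psi$ (resp.\ $[\exists]\psi$) becomes equivalent to a $\mu\mathtt{NML}$ formula asserting that $\psi$ holds at every (resp.\ some) bisim-type reachable from the current state by a finite chain of modal moves---something definable by mutually recursive $\nu$- and $\mu$-fixpoints over finitely many parameters indexed by the subformulas of $\varphi'$. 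Bisimulation-invariance of $\varphi'$ would then let me transfer the resulting $\mu\mathtt{NML}$ formula back along the bisimulation from $(\mathbb{S}^\sharp, s)$ to $(\mathbb{S}, s)$, using Proposition \ref{invariancelocal}.

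The main obstacle will be executing the companion construction and rigorously justifying the modality elimination. Monotone neighborhood frames lack a primitive accessibility relation on which to base ``reachability'', so the notion has to be formulated in terms of iterated $\Box$/$\Diamond$ moves traversing neighborhood sets, and one must verify that this interacts correctly with both monotonicity and the zig-zag clauses of neighborhood bisimulation. A second delicate point is controlling the parameter space of bisim-types so that only finitely many variables are needed in the replacement formula; here I expect the alternation structure of $\varphi'$ and the game semantics of Section 2.3 (in particular adequacy and positional determinacy) to play a central role, and I anticipate that the bulk of the technical work of the proof will be spent on this reduction.
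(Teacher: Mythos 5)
Your top-level decomposition matches the paper exactly: the routine translation of $\mu\mathtt{NML}$ into $\mathtt{NMSO}$ (the paper does this via the formulas $\mathtt{Eq}(\varphi,p)$), and the reduction of the hard direction, via Theorem \ref{globalchar}, to the key lemma that every bisimulation-invariant formula of $\mu\mathtt{NML}_g$ is equivalent to one of $\mu\mathtt{NML}$ --- this is precisely the paper's Theorem \ref{main}. The gap lies in your proposed proof of that key lemma, and I do not think it can be executed as described, for three reasons. First, in a disjoint sum the ``$\varphi'$-saturated'' part is not reachable from $s$ by any chain of $\Box$/$\Diamond$ moves, so a formula quantifying over ``bisim-types reachable from the current state'' never sees the very states you added to control the global modalities. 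Second, ``$\psi$ holds at every state reachable by a modal move'' is not expressible in $\mu\mathtt{NML}$: $\Box q$ asserts the existence of a neighborhood contained in $\llbracket q \rrbracket$ and $\Diamond q$ that every neighborhood meets $\llbracket q \rrbracket$; neither quantifies universally over all members of all neighborhoods, and by monotonicity $S \in \sigma(s)$ whenever $\sigma(s) \neq \emptyset$, so the set of one-step ``reachable'' states degenerates to all of $S$ anyway. Third, uniformity: you build a companion per pointed model, but the replacement for $[\forall]\psi$ and $[\exists]\psi$ must yield a single $\mu\mathtt{NML}$ formula correct on \emph{all} models, and nothing in the sketch guarantees the global subformulas evaluate the same way across different companions.

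The paper circumvents all three difficulties at once with a different idea: a L\"owenheim--Skolem argument fixes, once and for all, a single ``universal'' model $\mathbb{U}$ realizing every bisimulation type of pointed models of size at most $\kappa$. Since $(\mathbb{U}+\mathbb{S}) \sim_g \mathbb{U}$ for every such $\mathbb{S}$ (Lemma \ref{usim}), each subformula $[\forall]\psi$ or $[\exists]\psi$ has a \emph{constant} truth value on every companion $\mathbb{U}+\mathbb{S}$, so the translation $t$ simply replaces it by $\top$ or $\perp$ --- no reachability and no fixpoint encoding of global quantification is needed. Correctness of $t$ (Lemma \ref{mainlemma}) is then established by a game-theoretic strategy transfer between $\mathcal{G}(\mathbb{S},t(\varphi))$ and $\mathcal{G}(\mathbb{U}+\mathbb{S},\varphi)$, using positional determinacy and Lemma \ref{binddef}, after which bisimulation invariance of $\varphi$ closes the argument exactly as you intended. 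To salvage your route you would need to replace the per-model saturated companion by such a fixed universal model and abandon the reachability formulation entirely.
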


We prove only one part of this inclusion here, leaving the difficult direction for later. Given any formula of $\mu \mathtt{NML}$, we shall find an equivalent formula of $\mathtt{NMSO}$. More precisely, for every formula $\varphi$ of $\mu \mathtt{NML}$, and any $p \in \mathit{Var}$, we shall construct a formula $\mathtt{Eq}(\varphi,p)$ such that
$$(\mathbb{S},s)\vDash \mathtt{Eq}(\varphi,p) \text{ iff } V(p) = \llbracket \varphi \rrbracket_\mathbb{S}$$
From this, we can obtain our translation  $c$ of $\mu \mathtt{NML}$ into $\mathtt{NMSO}$ as follows: given any formula $\varphi$, let $p$ be a fresh variable that does not appear in $\varphi$ and let $q$ be any variable that does not appear in $\mathtt{Eq}(\varphi,p)$. Then we set
$$c(\varphi) := \exists p \exists q ( sr(q) \wedge q \subseteq p \wedge \mathtt{Eq}(\varphi,p))$$
Then clearly $\varphi$ is equivalent to $c(\varphi)$.

The reader can easily construct the formulas $\mathtt{Eq}(p,q)$ and $\mathtt{Eq}(\neg p, q)$, so we leave out the details. The steps for conjunction and disjunction are also fairly simple; the main observation needed for all these cases is that the basic set theoretic operations like union, intersection and complement are definable in $\mathtt{NMSO}$. For box- and diamond-formulas, we proceed as follows:
$$\mathtt{Eq}(\Box\varphi ,p) := \forall q(q \subseteq p \leftrightarrow \exists r (\mathtt{Eq}(\varphi,r) \wedge \Box (q,r)))$$
where $q,r$ are fresh variables that do not appear in $\varphi$.  Since diamond-formulas are dual to box-formulas, we leave this simple case out. 

Finally, we have to take care of the fixpoint formulas. We treat only the case for the least fixpoint formulas, since the case for greatest fixpoints is dual to this one. We set:
\begin{displaymath}
\begin{array}{lcl}
 \mathtt{Eq}(\mu q. \varphi, p) & := & \mathtt{Eq}(\varphi[p/q],p)  \\
& \wedge & \forall p' (\mathtt{Eq}(\varphi[p'/q],p') \rightarrow p \subseteq p' )
\end{array}
\end{displaymath}
Here, $r$ and $p'$ are fresh variables that do not appear in $\mu q.\varphi$.
This formula simply says that the value of $p$ is a least fixpoint of the monotone function on $\psf(S)$ determined by the formula $\varphi$, which by the Knaster-Tarski fixpoint theorem ensures that the formula $\mathtt{Eq}(\mu q.\varphi,p)$ has the right meaning. Here,  $\varphi[p/q]$ denotes the result of uniformly substituting $p$ for $q$ in $\varphi$.

\subsection{The monotone modal $\mu$-calculus inside $\mu \mathtt{NML}_g$}

We shall now prove Theorem \ref{goal}, and to do this we shall characterize $\mu\mathtt{NML}$ inside $\mu \mathtt{NML}_g$ in order to derive the main characterization result from Theorem \ref{globalchar}. First, we prove a simple little lemma in ZFC set theory:
\begin{lemma}
The language $\mu \mathtt{NML}_g$ has a Löwenheim-Skolem number. In other words,
there exists a cardinal $\kappa$ such that every satisfiable formula in $\mu \mathtt{NML}_g$ is satisfiable in a pointed model $(\mathbb{S},s)$ with $\card{S} \leq \kappa$.
\end{lemma}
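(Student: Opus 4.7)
The plan is to give a simple set-theoretic argument that does not rely on any syntactic or semantic properties of $\mu\mathtt{NML}_g$ beyond the basic fact that its formulas form a set. Since $\mathit{Var}$ is countable, the set of formulas of $\mu\mathtt{NML}_g$ is itself countable, so in particular the collection $F$ of satisfiable formulas of $\mu\mathtt{NML}_g$ is a set. For each $\varphi \in F$, the collection of cardinals $\lambda$ such that $\varphi$ is satisfiable in some pointed model $(\mathbb{S},s)$ with $\card{S}=\lambda$ is a non-empty subclass of the class of cardinals; since the cardinals are well-ordered, it has a least element, which I denote $\lambda(\varphi)$.

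Now I would set $\kappa := \sup\{\lambda(\varphi) \mid \varphi \in F\}$. Because $F$ is a set, this supremum exists as an ordinal. Moreover, the supremum of a set of cardinals is again a cardinal: if the supremum $\alpha$ were not a cardinal, then $|\alpha|<\alpha$, and since each $\lambda(\varphi)$ is a cardinal with $\lambda(\varphi)\leq\alpha$, we would obtain $\lambda(\varphi) = |\lambda(\varphi)| \leq |\alpha| < \alpha$ for every $\varphi \in F$, contradicting the choice of $\alpha$ as the supremum. Hence $\kappa$ is a cardinal, and by construction every satisfiable $\varphi$ is witnessed by some pointed model $(\mathbb{S},s)$ with $\card{S} = \lambda(\varphi) \leq \kappa$, which is exactly the statement of the lemma.

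There is essentially no obstacle; the argument is completely general and goes through for any logic whose formulas form a set and whose semantic clauses are formalizable in ZFC. One could instead seek an explicit and tight bound (for example via a tree-model construction or a tableau argument tailored to the $\mu$-calculus, analogous to the small model property of the standard modal $\mu$-calculus), but the lemma as stated only asserts the existence of some such $\kappa$, for which the bookkeeping above is sufficient.
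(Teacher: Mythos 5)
Your argument is correct and is essentially identical to the paper's own proof: both define $\lambda(\varphi)$ as the least cardinality of a satisfying model and then take a cardinal bounding the set $\{\lambda(\varphi)\mid\varphi \text{ satisfiable}\}$, which is a set by Replacement (you spell out the supremum step in slightly more detail, but the idea is the same). No issues.
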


\begin{proof}
For each satisfiable formula $\varphi$, let $\lambda(\varphi)$ be the smallest cardinal number such that $\varphi$ is satisfiable in a pointed model of cardinality at most $\lambda(\varphi)$. By the Axiom Schema of Replacement, the class $\{\lambda(\varphi) \mid \varphi \in \mu \mathtt{NML}_g\}$ forms a \textit{set} (a countable set, in fact). Hence, there is a cardinal $\kappa$ that is greater than each $\lambda(\varphi)$, and the proof is done. 
\end{proof}

Let $\kappa$  be the smallest Löwenheim-Skolem number for $\mu \mathtt{NML}_g$, so that $\kappa$ has the property described in the previous lemma. From now on, we assume that we have at our disposal a fixed neighborhood model $ \mathbb{U} = (U,\gamma,V)$ such that, for every  pointed model $(\mathbb{S},s)$ with $\card{S} \leq \kappa$, there is some $u \in U$ such that $(\mathbb{S},s)\sim (\mathbb{U},u)$. It is not hard to see that such a model does exist: just take a disjoint union of all neighborhood models defined on subsets of some fixed set of cardinality $\kappa$. Since the collection of all these models forms a set, the disjoint union is well defined.  

\begin{lemma}
\label{usim}
For every model $\mathbb{S}$ with $\card{S} \leq \kappa$, there is a global neighborhood bisimulation $R$ between $\mathbb{U} + \mathbb{S}$ and $\mathbb{U}$ such that $(u,u) \in R$ for each $u \in U$. Hence, for all $u \in U$, we have:
$$  (\mathbb{U},u) \sim_g (\mathbb{U} + \mathbb{S},u)  $$
\end{lemma}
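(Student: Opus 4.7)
The plan is to assemble $R$ as the union of the diagonal on $U$ (viewed as a relation from $\mathbb{U}+\mathbb{S}$ to $\mathbb{U}$) with, for each $s \in S$, a bisimulation that links $s$ to a carefully chosen counterpart in $U$. The crucial ingredient is the defining property of $\mathbb{U}$: it supplies, for every $s \in S$, a bisimilar counterpart inside $\mathbb{U}$. Everything else reduces to the two closure properties of neighborhood bisimilarity already stated in the excerpt—closure under unions, and the fact that the graph of each insertion into a disjoint union is a bisimulation—plus the routine observation that compositions of neighborhood bisimulations are again bisimulations.

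Concretely, since $\card{S} \leq \kappa$, for every $s\in S$ the defining property of $\mathbb{U}$ yields a point $u_s \in U$ together with a neighborhood bisimulation $B_s \subseteq S \times U$ such that $(s,u_s) \in B_s$. By the disjoint-union proposition, the (converse of the) graph of the insertion $\iota_\mathbb{S} : \mathbb{S} \to \mathbb{U}+\mathbb{S}$ is a neighborhood bisimulation from $\mathbb{U}+\mathbb{S}$ to $\mathbb{S}$; composing it with $B_s$ produces a neighborhood bisimulation $R_s$ from $\mathbb{U}+\mathbb{S}$ to $\mathbb{U}$ with $(s,u_s) \in R_s$. Analogously, the (converse of the) graph of $\iota_\mathbb{U} : \mathbb{U} \to \mathbb{U}+\mathbb{S}$ is a bisimulation, and under the standing identification of $u$ with $\iota_\mathbb{U}(u)$ it is precisely the diagonal relation
\[
R_0 := \{(u,u) \mid u \in U\} \subseteq (U+S) \times U.
\]

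Finally, I set $R := R_0 \cup \bigcup_{s \in S} R_s$. By closure of neighborhood bisimulations under unions, $R$ is itself a neighborhood bisimulation between $\mathbb{U}+\mathbb{S}$ and $\mathbb{U}$. Globality is immediate from the construction: on the left every $u \in U$ is paired with itself via $R_0$ and every $s \in S$ is paired with $u_s$ via $R_s$; on the right every $u \in U$ is paired with itself via $R_0$. Since $(u,u) \in R$ for all $u \in U$, this $R$ witnesses $(\mathbb{U}+\mathbb{S},u) \sim_g (\mathbb{U},u)$, from which the stated equivalence $(\mathbb{U},u) \sim_g (\mathbb{U}+\mathbb{S},u)$ follows by symmetry. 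I do not anticipate a real obstacle: the argument is essentially a bookkeeping exercise relying on the universal property of $\mathbb{U}$ and on the closure properties of $\sim$ already established; the only slightly non-standard step is verifying that neighborhood bisimulations compose, which is a textbook zig-zag check I would carry out once.
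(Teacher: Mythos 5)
Your proof is correct and follows essentially the same route as the paper: take the union of the identity relation on $U$ with, for each $s \in S$, a bisimulation linking $s$ to a counterpart in $\mathbb{U}$ supplied by the defining property of $\mathbb{U}$, then invoke closure under unions and check fullness. The only cosmetic difference is that you justify the lifting of each $B_s$ to a bisimulation out of $\mathbb{U}+\mathbb{S}$ by composing with the converse insertion graph, whereas the paper simply asserts this lifting directly.
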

\begin{proof}
For every $t \in S$ there is neighborhood bisimulation $R_t$ between $\mathbb{S}$ and $\mathbb{U}$ such that $(t,t') \in R_t$ for some $t' \in U$.  This $R_s$ is a neighborhood bisimulation between $\mathbb{U} + \mathbb{S}$  and $\mathbb{U}$ as well. Furthermore, the identity relation $\mathsf{Id}_U$ on $U$ is a neighborhood bisimulation between $\mathbb{U}$ and $\mathbb{U} + \mathbb{S}$. Hence, since neighborhood bisimulations are closed under unions, we get that
$$\mathsf{Id}_U \cup \bigcup_{t \in S} R_t$$
is a neighborhood bisimulation. Since this relation $R \subseteq (U + S) \times U$ is full on both $U + S$ and $U$, and since $(u,u) \in R$ for each $u \in U$, the result follows.  
\end{proof}

Now, given any fixed formula $\varphi$ of $\mu \mathtt{NML}_g$, we define a translation $t : \mathsf{Sub}(\varphi) \to \mu \mathtt{NML}$ inductively as follows:
\begin{itemize}
\item $t(\Box \psi) = \Box t(\psi)$ and $t(\Diamond\psi) = \Diamond t(\psi)$ 
\item $t(\gbox \psi) = {\perp}$ if for all $u \in U $ we have $(u,\psi) \in \mathit{Win}_\exists(\mathcal{G}(\mathbb{U},\varphi))$
\item Otherwise, if there is some $u \in U$ such that  $(u,\psi) \in \mathit{Win}_\exists(\mathcal{G}(\mathbb{U},\varphi))$, set $t([\forall]\psi) = \top $
\item  $t([\exists] \psi) = \top$ if there is some $u \in U $ for which we have $(u,\psi) \in \mathit{Win}_\exists(\mathcal{G}(\mathbb{U},\varphi))$
\item Otherwise, if for every $u \in U$ we have  $(u,\psi) \in \mathit{Win}_\forall (\mathcal{G}(\mathbb{U},\varphi))$, set $t(\gdi\psi) = {\perp}$
\item $t(p) = p$ and $t(\neg p) = \neg p$
\item $t (\mu p. \psi) = \mu p. t(\psi)$ and similarly for $\nu$.
\end{itemize}

\begin{lemma}
\label{binddef}
Suppose $t$ is the translation associated with a well-named formula $\varphi$, and let $p$ be a bound variable of $\varphi$ that appears in $t(\varphi)$ also. Then $t(\varphi)$ is a well-named formula too, and we have $$t(\mathcal{D}(p,\varphi)) = \mathcal{D}(p,t(\varphi))$$
\end{lemma}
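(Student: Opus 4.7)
The plan is to proceed by induction on the structure of $\varphi$, relying on two invariants of the translation $t$ that are immediate from its definition: (i) $t$ introduces no fresh propositional variables — every variable appearing in $t(\psi)$ already occurs in $\psi$, since the only novel formulas $t$ can produce are the constants $\top$ and $\bot$; and (ii) $t$ preserves binders, i.e.\ $t(\eta p.\chi) = \eta p.t(\chi)$ for $\eta \in \{\mu,\nu\}$, and maps every local connective to the same connective applied to the translated arguments. Only the global-modality clauses can ``destroy'' structure, by collapsing an entire subformula to $\top$ or $\bot$.

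The crux is the following observation. Since $\varphi$ is well-named, every occurrence of $p$ in $\varphi$ lies within the scope of the unique binder $\eta p.\mathcal{D}(p,\varphi)$. If this binder were a strict subformula of some global-modality subformula $[\forall]\chi$ or $[\exists]\chi$ of $\varphi$, then every occurrence of $p$ in $\varphi$ would also lie in that subformula, and would therefore be erased by $t$ when the subformula is replaced by a constant. The variable $p$ would then not occur in $t(\varphi)$ at all, contradicting the hypothesis. So $\eta p.\mathcal{D}(p,\varphi)$ is not strictly inside any global-modality subformula, and by invariant (ii) its translation $\eta p.t(\mathcal{D}(p,\varphi))$ really does occur as a subformula of $t(\varphi)$.

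From here the two conclusions follow. By the invariants, every binder in $t(\varphi)$ arises by translating a binder of $\varphi$; since $\varphi$ is well-named, there is exactly one binder for $p$ in $\varphi$, and we have just located its image in $t(\varphi)$. Hence $\mathcal{D}(p,t(\varphi))$ is well defined and equals $t(\mathcal{D}(p,\varphi))$. For the well-namedness of $t(\varphi)$: different bound variables of $\varphi$ yield different bound variables of $t(\varphi)$, each with a single binder; and no variable can appear both free and bound in $t(\varphi)$, because any surviving occurrence of a bound variable $p$ of $\varphi$ sits inside the surviving image of its unique binder. The induction itself is routine, once one packages the hypothesis uniformly as: ``$t(\varphi)$ is well-named, every bound variable of $t(\varphi)$ is a bound variable of $\varphi$, and $\mathcal{D}(p,t(\varphi)) = t(\mathcal{D}(p,\varphi))$ for every bound variable $p$ of $\varphi$ appearing in $t(\varphi)$.''

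The only point requiring real care — rather than any substantial obstacle — is the bookkeeping around the global-modality cases, where $t$ discards the entire content of the subformula. One needs precisely the hypothesis that $p$ occurs in $t(\varphi)$ in order to exclude the degenerate situation in which $p$'s binder has been erased along with a collapsed $[\forall]$ or $[\exists]$, and to conclude that the binder of $p$ survives the translation intact.
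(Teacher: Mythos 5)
Your proposal is correct and matches the paper's own argument in its essential content: the key step in both is that, since every occurrence of $p$ lies inside the unique binder $\eta p.\mathcal{D}(p,\varphi)$, that binder cannot sit under a global modality (which $t$ collapses to a constant) without erasing $p$ from $t(\varphi)$, so the binder survives as $\eta p.t(\mathcal{D}(p,\varphi))$ and well-namedness of $t(\varphi)$ forces $\mathcal{D}(p,t(\varphi)) = t(\mathcal{D}(p,\varphi))$. You spell out the well-namedness of $t(\varphi)$ in more detail than the paper, which simply asserts it as easy, but the route is the same.
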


\begin{proof}
It is easy to show that $t(\varphi)$ is well-named. So suppose that $p$ is a bound variable in $\varphi$ that also appears in $t(\varphi)$. Since $\varphi$ is well-named, every occurrence of $p$ in $\varphi$ is in the form of a subformula of $\mathcal{D}(p,\varphi)$. Hence, clearly, the subformula $\eta p. \mathcal{D}(p,\varphi)$ of $\varphi$ (where $\eta \in \{\mu,\nu\}$) cannot be in the scope of any occurrence of $[\exists]$ or $[\forall]$. This means that $t(\eta p. \mathcal{D}(p,\varphi)) = \eta p. t( \mathcal{D}(p,\varphi)) $ is a subformula of $t(\varphi)$, and it follows that $t(\mathcal{D}(p,\varphi))$. Since $t(\varphi)$ is well-named we get 
$\mathcal{D}(p,t(\varphi)) = t(\mathcal{D}(p,\varphi))$
as required.
\end{proof}

\begin{lemma}
\label{mainlemma}
For every pointed model $(\mathbb{S},s)$ with $\card{S} \leq \kappa$, and for every formula $\varphi$ in $\mu \mathtt{NML}_g$, we have
$$(\mathbb{S},s) \vDash t(\varphi) \text{ iff } (\mathbb{U} + \mathbb{S},s) \vDash \varphi$$
where $t : \mathsf{Sub}(\varphi) \to \mu \mathtt{NML}$ is the translation associated with the formula $\varphi$.
\end{lemma}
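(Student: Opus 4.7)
By adequacy of the game semantics, it suffices to show that Eloise wins $\mathcal{G}(\mathbb{S}, t(\varphi))$ from $(s, t(\varphi))$ if and only if she wins $\mathcal{G}(\mathbb{U}+\mathbb{S}, \varphi)$ from $(s, \varphi)$. I would obtain this by transferring positional winning strategies between the two games, using the fact that $t$ commutes with all syntactic constructors on subformulas (Lemma~\ref{binddef}) except at global modalities, which it collapses to $\top$ or $\perp$. Below I sketch the forward implication; the converse is symmetric and in fact simpler, since the shadow game already lives inside $S$.

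\textbf{Shadowing inside $S$.} Given a positional winning strategy $\chi$ for Eloise in $\mathcal{G}(\mathbb{S}, t(\varphi))$, I define a strategy $\chi'$ in $\mathcal{G}(\mathbb{U}+\mathbb{S}, \varphi)$ by \emph{shadowing}: as long as the match stays inside $S$, each position $(u, \psi)$ is paired with the shadow position $(u, t(\psi))$, and $\chi'$ copies the move suggested by $\chi$. Boolean moves transfer verbatim. At $(u, \Box\psi)$ with $u \in S$, $\chi$ supplies some $Z \in \sigma(u)$; since $Z \subseteq S$ gives $Z \cap S = Z$, the same $Z$ lies in $\tau(u)$, and $\chi'$ plays it, forcing Abelard's reply back into $S$. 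At $(u, \Diamond\psi)$, Abelard's choice $Z \in \tau(u)$ projects to $Z \cap S \in \sigma(u)$, which must be nonempty (otherwise $\emptyset \in \sigma(u)$, allowing Abelard to strand Eloise in the shadow and contradicting that $\chi$ is winning), so $\chi$ selects a witness $v \in Z \cap S$ that $\chi'$ plays in the original game.

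\textbf{Global modalities and verification.} When the match reaches a global-modality position $(u, [\forall]\psi)$ or $(u, [\exists]\psi)$, the shadow formula is either $\top$ or $\perp$. The $\perp$-case would make the shadow position losing for Eloise, contradicting that $\chi$ is winning, and hence cannot arise under a $\chi'$-conforming play. In the $\top$-case, the defining condition of $t$ gives a uniform winning condition for Eloise on $\mathcal{G}(\mathbb{U}, \varphi)$, which by Lemma~\ref{usim} combined with Proposition~\ref{invarianceglobal} lifts to a winning strategy for Eloise from the appropriate successor in $\mathcal{G}(\mathbb{U}+\mathbb{S}, \varphi)$; at this point $\chi'$ abandons shadowing and uses this freshly obtained strategy for the remainder of the match. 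Finite matches are settled by case analysis on terminal positions. An infinite match either (i)~stays forever in $S$ without hitting a global modality, mirroring an infinite $\chi$-play, or (ii)~switches to the fresh strategy, in which case the tail determines the parity outcome and is won by Eloise by construction. The main obstacle I anticipate is the parity-condition transfer in case~(i): one must verify that the highest-ranking bound variable visited infinitely often in the shadow match coincides with that in the original match, which hinges on the rank-preserving behavior of $t$ guaranteed by Lemma~\ref{binddef}.
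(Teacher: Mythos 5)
Your proposal is correct and matches the paper's proof in all essentials: both arguments transfer Eloise's strategy between the two games by shadowing matches position-by-position via $t$, use Lemma~\ref{binddef} to handle fixpoint unfoldings and the parity condition, and handle the global modalities by switching to a winning strategy in $\mathcal{G}(\mathbb{U}+\mathbb{S},\varphi)$ obtained from Lemma~\ref{usim} together with Proposition~\ref{invarianceglobal}. The only (inessential) differences are that the paper first lifts $\chi$ to $\mathcal{G}(\mathbb{U}+\mathbb{S},t(\varphi))$ via Proposition~\ref{invariancelocal} so that both games live on the same model, whereas you keep the shadow game on $\mathbb{S}$ and intersect neighborhoods with $S$ by hand, and the paper's converse transfers Abelard's winning strategy forward (using determinacy) rather than Eloise's backward.
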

\begin{proof}
First, for each basic position $(v,\psi)$ with $v \in U + S$ and such that $(v,\psi) \in \mathit{Win}_\exists(\mathcal{G}(\mathbb{U}+ \mathbb{S},\varphi))$, pick a strategy $\tau_{(v,\psi)} $ that is winning at $(v,\psi)$ in $\mathcal{G}(\mathbb{U}+ \mathbb{S},\varphi)$. Note that for every position $(u,\psi) \in \mathit{Win}_\exists(\mathcal{G}(\mathbb{U},\varphi))$, we have  $(u,\psi) \in \mathit{Win}_\exists(\mathcal{G}(\mathbb{U}+ \mathbb{S},\varphi))$ also by Theorem \ref{invarianceglobal} and Lemma \ref{usim}. Given a position $\pos$ in $\mathcal{G}(\mathbb{S} + \mathbb{U},\varphi)$ let $t(\pos)$ denote the pair $(v,t(\psi))$ if $\pos$ is $(v,\psi)$, and let $t(\pos)$ be $(\mathsf{P},Z,t(\psi))$ if $\pos$ is $(\mathsf{P},Z,\psi)$ for $\mathsf{P} \in \{\exists,\forall\}$.

Now, suppose $\exists$ has a winning strategy in  $\mathcal{G}(\mathbb{S},t(\varphi))$ at position $(s,t(\varphi))$. Then by Theorem \ref{invariancelocal}, $\exists$ has a winning strategy $\chi$ in $\mathcal{G}(\mathbb{U} + \mathbb{S},t(\varphi))$ at position $(s,t(\varphi))$ too, since $t(\varphi)$ is a formula in $\mu \mathtt{NML}$ and since
$$ (\mathbb{S},s) \sim (\mathbb{U} + \mathbb{S},s) $$
We are going to construct a winning strategy $\chi^* $ for $\exists$ in $\mathcal{G}(\mathbb{U} + \mathbb{S},\varphi)$ at the starting position $(s,\varphi)$. We shall define $\chi^*$ by induction on the length of a partial $\pi$, and show by simultaneous induction on the length of a $\chi^*$-guided match $\pi$  that one of the following two cases holds:
\begin{description}
\item[Case 1:] $\pi$ is of the form 
$(\pos_0,...,\pos_k)$ where $(t(\pos_0),...,t(\pos_k))$ is a $\chi$-guided partial match, or:
\item[Case 2:] there is some position $(v,\psi)$ that appears on $\pi$ such that 
$$(v,\psi) \in \mathit{Win}_\exists(\mathcal{G}(\mathbb{U}+ \mathbb{S},\varphi))$$
Furthermore, if $(v,\psi)$ is the \textit{first} position in $\pi$ for which this holds, then the final segment of $\pi$ starting with the first occurrence of $(v,\psi)$ is a $\tau_{(v,\psi)}$-guided partial match.
\end{description}
It will follow that every infinite $\chi^*$-guided match $M$ starting at $(s,\varphi)$ either corresponds to a $\chi$-guided infinite match $t[M]$ starting at $(s,t(\varphi))$, such that  $M$ is a loss for $\exists$ if and only if $t[M]$ is, or (apart from some finite initial segment) $M$ is a $\tau_{(u,\psi)}$-guided match starting at $(u,\psi)$ for some pair $(u,\psi)$ with $u \in U$. Hence, since $\chi$ and all the $\tau_{(u,\psi)}$ are winning strategies, $\exists$ wins every infinite $\chi^*$-guided match. 

Clearly the induction hypothesis holds for the match consisting only of the initial position $(s,\varphi)$. So suppose that the induction hypothesis holds for a match $\pi$ of length $k$. If the last position of $\pi$ belongs to $\exists$ then we show how to define the strategy $\chi^*$ on $\pi$ in such a way that the induction hypothesis remains true for $\chi^*(\pi)$, and if the last position of $\pi$ belongs to $\forall$ then we show that the induction hypothesis is true for each  partial match resulting from a possible move by $\forall$. If $\pi$ falls under Case 2 then the argument is trivial since then we just follow some strategy $\tau_{(u,\psi)}$ that was picked at the first occurence of a position $(u,\psi)$ with $u \in U$. So we consider Case 1, and divide it into a number of subcases depending on the shape of the last position on $\pi$. We shall assume here, without loss of generality, that the strategy $\chi$ was positional. We only treat the non-trivial cases, leaving the others to the reader:

Suppose the last position of $\pi$ is $(v,p)$ where $p$ is a bound variable. Then  $t(v,p) = (v,t(p)) = (v,p)$, and $p$ must appear as a bound variable of $t(\varphi)$. The only extension of $\pi$ is with the position $(v, \mathcal{D}(p,\varphi))$, and by Lemma \ref{binddef} we have
$$t(v, \mathcal{D}(p,\varphi)) = (v, t(\mathcal{D}(p,\varphi))) = (v,\mathcal{D}(p,t(\varphi)))$$
which shows that $t[\pi]\cdot t(v, \mathcal{D}(p,\varphi))$ is a $\chi$-guided match, as required.

Finally, we treat the case where the last position of $\pi$ is of the form $(v,\gbox \psi)$ or $(v,\gdi\psi)$. Suppose the first of these two cases holds. Then this position belongs to $\forall$, and we must show that the inductive hypothesis holds for each extension of $\pi$ given by a choice made by $\forall$. Now, for every $u \in U$ we must have $(u,\psi) \in \mathit{Win}_\exists (\mathcal{G}(\mathbb{U},\varphi))$, for otherwise we would have $t(\gbox  \psi) = {\perp}$, which means that $t(v,\gbox \psi) = (v,{\perp})$, an immediate loss for $\exists$. Hence, we have $(w,\psi) \in \mathit{Win}_\exists (\mathcal{G}(\mathbb{U}+ \mathbb{S},\varphi))$ for all $w \in U + S$ as well, since by Lemma \ref{usim} every pointed model $(\mathbb{U} + \mathbb{S},w)$ is globally bisimilar with some pointed model $(\mathbb{U},u)$. This means that for every choice $(w,\psi)$ by $\forall$, the strategy $\tau_{(w,\psi)}$ is defined, and so the induction hypothesis remains true.

Dually, if the last position of $\pi$ is of the form $(v,\gdi \psi)$, then there must be some $u \in U$ such that $(u,\psi) \in \mathit{Win}_\exists(\mathcal{G}(\mathbb{U},\varphi))$, since otherwise $t(v,\gdi \psi)$ is $(v,{\perp})$ and $\pi$ is a loss for $\exists$. Hence $(u,\psi) \in \mathit{Win}_\exists(\mathcal{G}(\mathbb{U} + \mathbb{S},\varphi))$ as well, and so the strategy $\tau_{(u,\psi)}$ is defined. So if we set $\chi^*(\pi) = (u,\psi)$ then the induction hypothesis remains true. We have now defined the strategy $\chi^*$ so that $\exists$ never gets stuck, and so that she wins every infinite $\chi^*$-guided match.

Conversely, suppose that $\forall$ has a winning strategy in $\mathcal{G}(\mathbb{S},t(\varphi))$ at the start position $(s,t(\varphi))$. Then we can prove, using an argument that is completely symmetric with the one we used above, that there is a winning strategy for $\forall$ in $\mathcal{G}(\mathbb{U} + \mathbb{S},\varphi)$ at the position $(s,\varphi)$. Hence, the proof is done.
\end{proof}
We can now prove the main technical result of this paper:

\begin{theo}
\label{main}
$$ \mu \mathtt{NML}_g{/}{\sim} \equiv \mu \mathtt{NML}$$
\end{theo}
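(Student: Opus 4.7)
The easy inclusion $\mu\mathtt{NML} \subseteq \mu\mathtt{NML}_g{/}{\sim}$ is immediate: every formula of $\mu\mathtt{NML}$ is syntactically a formula of $\mu\mathtt{NML}_g$, and by Proposition \ref{invariancelocal} (Bisimulation Invariance), it is invariant for neighborhood bisimulations. The substantive content of the theorem is the reverse inclusion. My plan is to take an arbitrary bisimulation-invariant $\varphi \in \mu\mathtt{NML}_g$ and show that its translation $t(\varphi) \in \mu\mathtt{NML}$ (the one defined just before Lemma \ref{binddef} and used in Lemma \ref{mainlemma}) is globally equivalent to $\varphi$.

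The first step is to handle models of size at most $\kappa$. Fix a pointed model $(\mathbb{S},s)$ with $\card{S} \leq \kappa$. By Proposition 2, the graph of the insertion map $\iota : \mathbb{S} \to \mathbb{U} + \mathbb{S}$ is a neighborhood bisimulation, so $(\mathbb{S},s) \sim (\mathbb{U}+\mathbb{S},s)$. Since we are assuming $\varphi$ is invariant for $\sim$, we obtain
\[
(\mathbb{S},s) \vDash \varphi \iff (\mathbb{U}+\mathbb{S},s) \vDash \varphi.
\]
Applying Lemma \ref{mainlemma} to the right-hand side gives $(\mathbb{U}+\mathbb{S},s) \vDash \varphi \iff (\mathbb{S},s) \vDash t(\varphi)$. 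Chaining these equivalences yields $(\mathbb{S},s) \vDash \varphi \iff (\mathbb{S},s) \vDash t(\varphi)$ for every pointed model of cardinality at most $\kappa$.

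The second step is to lift this agreement to arbitrary pointed models using the Löwenheim-Skolem property of $\mu\mathtt{NML}_g$ proved in the preceding lemma. Suppose, for contradiction, that $\varphi$ and $t(\varphi)$ disagree on some pointed model. Then one of the two formulas $\varphi \wedge \neg t(\varphi)$ or $\neg \varphi \wedge t(\varphi)$ is satisfiable; both lie in $\mu\mathtt{NML}_g$, where negation is definable via De Morgan duality (and dualizing $\mu$ and $\nu$) on all connectives including the global modalities. By the choice of $\kappa$ as the Löwenheim-Skolem number, any satisfiable formula of $\mu\mathtt{NML}_g$ has a model of cardinality at most $\kappa$. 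This contradicts the agreement of $\varphi$ and $t(\varphi)$ on small models established in the first step. Hence $\varphi \equiv t(\varphi)$, and since $t(\varphi) \in \mu\mathtt{NML}$, we are done.

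The plan is conceptually straightforward because all the heavy lifting happens in Lemma \ref{mainlemma}; the only step that requires care is checking that $\varphi$'s local bisimulation invariance is exactly what matches the bisimulation $\mathbb{S} \sim \mathbb{U}+\mathbb{S}$ supplied by Proposition 2 (note that a global bisimulation invariance hypothesis would not be available in general, but a local one suffices here because the insertion bisimulation is anchored at the chosen point $s$). No subtle obstacle should arise beyond keeping track of which notion of invariance is used where.
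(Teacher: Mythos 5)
Your proposal is correct and follows essentially the same route as the paper: agreement of $\varphi$ and $t(\varphi)$ on models of size at most $\kappa$ via the insertion bisimulation $(\mathbb{S},s)\sim(\mathbb{U}+\mathbb{S},s)$ together with Lemma \ref{mainlemma}, then a lift to all models via the L\"owenheim--Skolem property applied to $\varphi \wedge \neg t(\varphi)$ and $\neg\varphi \wedge t(\varphi)$. The only cosmetic differences are that you spell out the easy inclusion and phrase the argument as a direct equivalence proof rather than a proof by contradiction.
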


\begin{proof}
Suppose a formula $\varphi$ of $\mu\mathtt{NML}_g$ is invariant for neighborhood bisimulations, but not equivalent to any formula of $\mu \mathtt{NML}$. Then, in particular, $\varphi$ is not equivalent to $t(\varphi)$. So there are two possible cases:
\begin{description}
\item[Case 1:] $\varphi \wedge \neg t(\varphi)$ is satisfiable
\item[Case 2:] $\neg \varphi \wedge t(\varphi)$ is satisfiable
\end{description}
Here, we are using the fact that $\mu \mathtt{NML}_g$ is closed under negation, even though we have presented the formulas in negation normal form. 

So suppose Case 1 holds. Then by our choice of $\kappa$ there is a  pointed model $(\mathbb{S},s)$ such that $\card{S} \leq \kappa$, and such that $$(\mathbb{S},s)\vDash \varphi \wedge \neg t(\varphi)$$
This is a contradiction, since Lemma \ref{mainlemma} gives:
\begin{displaymath}
\begin{array}{lcl}
 (\mathbb{S},s)\vDash \varphi & \Leftrightarrow & (\mathbb{U} + \mathbb{S},s) \vDash \varphi \\
& \Leftrightarrow & (\mathbb{S},s)\vDash t(\varphi)
\end{array}
\end{displaymath}
Case 2 is handled in the same manner.
\end{proof}

Finally, we can conclude that the monotone modal $\mu$-calculus is indeed the neighborhood bisimulation invariant fragment of monadic second-order logic over neighborhood structures:
\begin{proof}[Proof of Theorem \ref{goal}]
If a formula $\varphi$ of $\mathtt{NMSO}$ is invariant for all neighborhood bisimulations, then it is invariant for global neighborhood bisimulations in particular. Hence, it is equivalent to a formula $\varphi'$ in $\mu \mathtt{NML}_g$ by Theorem \ref{globalchar}. By Theorem \ref{main}, it immediately follows that $\varphi'$ is equivalent to a formula of $\mu \mathtt{NML}$, and hence so is $\varphi$.
\end{proof}

\section{Concluding remarks}

Our main result showed that the Janin-Walukiewicz theorem for the modal $\mu$-calculus remains true for the $\mu$-calculus and monadic second-order logic interpreted on monotone neighborhood structures. This resolves an open problem in \cite{enqv:mona15}, and provides an expressive completeness result for the monotone $\mu$-calculus. 

An interesting question is whether the full language $\mathtt{NMSO}$ can be characterized by some fixpoint logic for neighborhood structures, in the style of \cite{walu:mona96} where it is shown that monadic second-order logic on trees is equivalent to a first-order fixpoint logic. Some of the ground work has already been done here: the main tool used for the characterization in \cite{walu:mona96} is a translation of monadic second-order logic into \textit{parity automata} over trees. A similar translation was achieved in \cite{enqv:mona15}, so it is possible that this result can be used for a translation of $\mathtt{NMSO}$ into a suitable first-order fixpoint language, relative to ``tree-like'' neighborhood structures. We leave this as a problem for future research.  

\bibliography{mu,automata,nabla,extra}
\bibliographystyle{plain}
\end{document}